\documentclass[conference]{IEEEtran}

\usepackage{amsmath,amssymb,amsthm}
\usepackage{mathtools}
\usepackage{bbm}
\usepackage{cite}
\usepackage{tikz}
\usepackage{pgfplots}
\pgfplotsset{compat=newest}
\usetikzlibrary{arrows.meta,positioning}

% Math operators and macros
\DeclareMathOperator{\wt}{wt}
\DeclareMathOperator{\rank}{rank}

\newcommand{\F}{\mathbb{F}}

% Theorem environments
\newtheorem{theorem}{Theorem}
\newtheorem{lemma}{Lemma}
\newtheorem{definition}{Definition}

\newtheorem{remark}{Remark}

\begin{document}

\title{Single-Shot and Few-Shot Decoding via Stabilizer Redundancy in Bivariate Bicycle Codes}

\author{
\IEEEauthorblockN{Mohammad Rowshan}
\IEEEauthorblockA{School of Electrical Engineering and Telecommunications\\ 
University of New South Wales (UNSW), Sydney, Australia\\ Email: mrowshan@ieee.org}
}

\maketitle

\begin{abstract}
Bivariate bicycle (BB) codes are a prominent class of quantum LDPC codes constructed from group algebras. While the logical dimension and quantum distance of \emph{coprime} BB codes are known to be determined by a greatest common divisor polynomial $g(z)$, the properties governing their fault tolerance under noisy measurement have remained implicit. In this work, we prove that this same polynomial $g(z)$ dictates the code's stabilizer redundancy and the structure of the classical \emph{syndrome codes} required for single-shot decoding. We derive a strict equality between the quantum rate and the stabilizer redundancy density, and we provide BCH-like bounds on the achievable single-shot measurement error tolerance. Guided by this framework, we construct small coprime BB codes with significantly improved syndrome distance ($d_S$) and evaluate them using BP+OSD. Our analysis reveals a structural bottleneck: within the coprime BB ansatz, high quantum rate imposes an upper bound on syndrome distance, limiting single-shot performance. These results provide concrete algebraic design rules for next-generation 2BGA codes in measurement-limited architectures.
\end{abstract}
\begin{IEEEkeywords}
Quantum LDPC codes, bivariate bicycle codes, single-shot decoding, stabilizer redundancy, syndrome decoding.
\end{IEEEkeywords}
%%%%%%%%%%%%%%%%%%%%%%%%%%%%%%%%%%%%%%%%%%%%%%%%%%%%
\section{Introduction}

Quantum error correction (QEC) is essential for fault-tolerant quantum
computation, allowing logical error rates to be suppressed at the cost of
spatial and temporal overhead~\cite{shor1995scheme,steane1996error}. Quantum
LDPC (qLDPC) codes are particularly attractive because they combine bounded
check weight and degree with asymptotically good
parameters~\cite{panteleev2022asymptoticallygood,panteleev2022almostlinear},
and recent work has shown that they can achieve strong finite-length
performance under belief-propagation plus ordered-statistics decoding
(BP+OSD)~\cite{panteleev2021degenerate}.

Bivariate bicycle (BB) codes, introduced by Bravyi \emph{et
al.}~\cite{bravyi2024highthreshold}, form a concrete qLDPC family with
weight-6 checks and a two-layer Tanner graph. Postema and
coauthors~\cite{postema2025existence,postema2025thesis} recently provided a
rigorous group-algebraic description of BB and two-block group-algebra (2BGA) codes.
They proved that for the \emph{coprime} subclass, the logical dimension
and quantum distance properties are determined by a greatest common divisor polynomial
$g(z)$. Although the commutative BB subclass is asymptotically bad in distance, it remains competitive at medium blocklengths and is well matched to neutral-atom and trapped-ion
architectures~\cite{postema2025thesis,poole2025rydberg}.

Stabilizer redundancy can be used to reduce the number of noisy syndrome
rounds. Bomb\'in introduced \emph{single-shot} fault tolerance in 3D
topological codes~\cite{bombin2015singleshot}; Lin and Pryadko extended this
idea to 2BGA codes~\cite{lin2024twobga}, and Lin \emph{et
al.}~\cite{lin2025singleshot} demonstrated single- and two-shot decoding for
GB/2BGA codes using sliding-window BP+OSD.
While these works established the utility of stabilizer redundancy through extensive
numerical search, a direct algebraic link between the construction parameters of
BB codes and their one- or few-round measurement robustness has been lacking.

%Shorter:
% Quantum error correction (QEC) enables fault tolerance by suppressing logical errors at the cost of overhead. Quantum LDPC (qLDPC) codes are particularly promising, combining bounded check weights with excellent finite-length performance under belief-propagation decoding (BP+OSD)~\cite{panteleev2021degenerate}. A prominent family is Bivariate Bicycle (BB) codes~\cite{bravyi2024highthreshold}, which admit a rigorous group-algebraic description; notably, Postema \emph{et al.}~\cite{postema2025existence} proved that for the \emph{coprime} subclass, the logical parameters are strictly determined by a generator polynomial $g(z)$. Although asymptotically limited, these codes are highly competitive at medium blocklengths for neutral-atom and ion-trap architectures.

%Stabilizer redundancy offers a path to \emph{single-shot} fault tolerance by allowing measurement errors to be corrected spatially rather than temporally~\cite{bombin2015singleshot}. While recent works have numerically demonstrated single-shot decoding for generalized algebraic codes~\cite{lin2024twobga,lin2025singleshot}, a direct algebraic framework linking the construction parameters---specifically $g(z)$---to measurement robustness remains lacking.

%%
In this paper, we develop such an algebraic framework for \emph{coprime BB} codes.
We extend the results of Postema~\cite{postema2025thesis} to the classical domain,
showing that the same polynomial $g(z)$ that controls the logical dimension also
dictates the structure of the classical \emph{syndrome codes}, i.e.\ the set of
noiseless stabilizer-syndrome patterns. We derive conditions under which one or a
few noisy rounds of stabilizer measurements are provably sufficient to correct
both data and measurement errors under ideal bounded-distance decoding.
Guided by these conditions, we identify small coprime BB codes with
significantly increased syndrome distance and study their single- and few-shot
performance using BP+OSD decoders under a simple phenomenological noise model.
The numerical gains are modest, but together with our algebraic bounds they reveal
a structural bottleneck: within the coprime BB subclass and at realistic
blocklengths, the constraints imposed by $g(z)$ make it difficult to achieve
large syndrome distance, and hence strong one-round measurement robustness, at
fixed check weight.

%Throughout we use ``single-shot'' in a finite-length, one-round sense: a single round of noisy stabilizer measurement combined with syndrome redundancy to correct both data and measurement errors up to some \emph{fixed} radii, rather than in the stronger asymptotic fault-tolerant sense where the measurement error threshold scales with code distance.

%%%%%%%%%%%%%%%%%%%%%%%%%%%%%%%%%%%%%%%%%%%%%%%%%%%%
\section{Preliminaries}

\subsection{CSS codes and syndrome codes}

A binary CSS code on $n$ qubits is specified by two parity-check matrices
$H_X\in\F_2^{m_X\times n}$ and $H_Z\in\F_2^{m_Z\times n}$ with
$H_X H_Z^\mathsf{T}=0$. The code encodes
\begin{equation}
  k = n - \rank(H_X) - \rank(H_Z)
\end{equation}
logical qubits. We denote the numbers of independent redundant X and Z
stabilizers by
\begin{equation}
  r_X = m_X - \rank(H_X),\qquad
  r_Z = m_Z - \rank(H_Z).
\end{equation}

Let $S_i^X$ be the X-type stabilizers associated with the rows of $H_X$.
A Z-error pattern is a vector $e_Z\in\F_2^n$, and its ideal X-syndrome is
\begin{equation}
  s_X := H_X e_Z^\mathsf{T} \in \F_2^{m_X}.
\end{equation}
The set of all such $s_X$ forms a linear code in the $m_X$-dimensional
syndrome space (see illustrative Fig.~\ref{fig:css-syndrome}).

\begin{definition}[Syndrome codes]
Let $H_X \in \F_2^{m_X\times n}$. Choose a full-row-rank matrix
$R_X \in \F_2^{r_X\times m_X}$ whose rows form a basis of the left nullspace
of $H_X$, i.e.\ $R_X H_X = 0$. The \emph{X-syndrome code} is
\begin{equation}
  C_{\mathrm{S}}^X := \{\, s\in\F_2^{m_X} : R_X s = 0 \,\}.
\end{equation}
It is precisely the set of all noiseless X-syndromes $s_X=H_X e_Z^\mathsf{T}$.
The Z-syndrome code $C_{\mathrm{S}}^Z$ is defined analogously from $H_Z$.
We denote their minimum distances by $d_{\mathrm{S}}^X,d_{\mathrm{S}}^Z$.
\end{definition}

\begin{lemma}[Basic properties]
\label{lem:syndrome-basic}
For any CSS code:
\begin{enumerate}
  \item $C_{\mathrm{S}}^X = \mathrm{im}(H_X)$ and
        $\dim C_{\mathrm{S}}^X = \rank(H_X)$;
        similarly $C_{\mathrm{S}}^Z = \mathrm{im}(H_Z)$ and
        $\dim C_{\mathrm{S}}^Z = \rank(H_Z)$.
  \item The number of independent X-stabilizer relations is
        $r_X = m_X - \rank(H_X)$; equivalently, $r_X$ is the number of rows
        of $R_X$.
\end{enumerate}
\end{lemma}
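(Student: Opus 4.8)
The plan is to prove part (1) by establishing the two inclusions between the kernel description $C_{\mathrm{S}}^X = \ker(R_X)$ and the image $\mathrm{im}(H_X)$, closing the gap with a dimension count, and then to read off part (2) directly from rank--nullity. Throughout I would fix the column-vector convention over $\F_2$, so that $\mathrm{im}(H_X) = \{H_X x : x \in \F_2^n\} \subseteq \F_2^{m_X}$ is the column space of $H_X$ and the left nullspace is $\{v \in \F_2^{m_X} : v^\mathsf{T} H_X = 0\}$, of which the rows of $R_X$ form a basis by hypothesis. Note that as $e_Z$ ranges over $\F_2^n$ the syndrome $s_X = H_X e_Z^\mathsf{T}$ ranges over exactly $\mathrm{im}(H_X)$, so the set of noiseless syndromes coincides with the column space by definition; the content of the lemma is that this agrees with the $\ker(R_X)$ description.

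First I would prove the easy inclusion $\mathrm{im}(H_X) \subseteq C_{\mathrm{S}}^X$: any $s = H_X e_Z^\mathsf{T}$ satisfies $R_X s = (R_X H_X)\, e_Z^\mathsf{T} = 0$, since $R_X H_X = 0$ by the defining property of $R_X$; hence $s \in C_{\mathrm{S}}^X$. The reverse inclusion is where the actual work lies, and I plan to obtain it by comparing dimensions rather than by constructing preimages explicitly. Because $R_X$ has full row rank $r_X$, the solution space $C_{\mathrm{S}}^X = \ker(R_X)$ has dimension $m_X - \rank(R_X) = m_X - r_X$. Applying rank--nullity to the linear map $v \mapsto v^\mathsf{T} H_X$ shows the left nullspace of $H_X$ has dimension $m_X - \rank(H_X)$; since the rows of $R_X$ form a basis of that space, this gives $r_X = m_X - \rank(H_X)$, which is exactly the quantity fixed in the preliminaries and establishes part (2). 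Substituting back yields $\dim C_{\mathrm{S}}^X = m_X - r_X = \rank(H_X) = \dim \mathrm{im}(H_X)$. Combining the established inclusion with this equality of dimensions forces $C_{\mathrm{S}}^X = \mathrm{im}(H_X)$, and $\dim C_{\mathrm{S}}^X = \rank(H_X)$ is then immediate. The $Z$-type statements follow verbatim with $(H_X, R_X)$ replaced by $(H_Z, R_Z)$.

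I do not expect a substantive obstacle, as the result is entirely standard linear algebra over $\F_2$; the only point requiring genuine care is bookkeeping with conventions. Specifically, I must keep the left nullspace of $H_X$ (the row dependencies, which the rows of $R_X$ span) cleanly separated from the right nullspace, and track the transpose in $s_X = H_X e_Z^\mathsf{T}$, so that the rank--nullity counts align and the identity $r_X = m_X - \rank(H_X)$ reproduces exactly the definition already stated in the preliminaries. As an equally short alternative for part (1), I could invoke the fundamental fact that the column space of $H_X$ and its left nullspace are orthogonal complements in $\F_2^{m_X}$, giving $\ker(R_X) = (\mathrm{rowspace}\,R_X)^\perp = (\text{left nullspace of } H_X)^\perp = \mathrm{im}(H_X)$ directly. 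I would nevertheless keep the dimension-count version as primary, since it avoids any appeal to non-degeneracy of the standard bilinear form over $\F_2$ and makes the bridge to part (2) transparent.
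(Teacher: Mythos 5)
Your proposal is correct and follows essentially the same route as the paper's own proof: the easy inclusion $\mathrm{im}(H_X)\subseteq C_{\mathrm{S}}^X$ from $R_X H_X = 0$, followed by the dimension count $\dim\ker(R_X) = m_X - r_X = \rank(H_X)$ to force equality, with part (2) read off from the fact that the rows of $R_X$ form a basis of the left nullspace. Your added care about conventions and the sketched orthogonal-complement alternative are fine embellishments but do not change the substance.
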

\begin{proof}
By construction $R_X$ has rows spanning the left nullspace of $H_X$, so
$\mathrm{row}(R_X) = \{u\in\F_2^{m_X} : u H_X = 0\}$ and
$\dim\mathrm{row}(R_X) = r_X = m_X-\rank(H_X)$. For any $e_Z$,
$s_X=H_X e_Z^\mathsf{T}$ satisfies
$R_X s_X = R_X H_X e_Z^\mathsf{T}=0$, so
$\mathrm{im}(H_X)\subseteq C_{\mathrm{S}}^X$. Conversely, $C_{\mathrm{S}}^X$
has dimension
\[
  \dim C_{\mathrm{S}}^X = m_X - \rank(R_X) = m_X - r_X = \rank(H_X),
\]
which equals $\dim\mathrm{im}(H_X)$; hence
$C_{\mathrm{S}}^X=\mathrm{im}(H_X)$. The statements for $C_{\mathrm{S}}^Z$
are analogous.
\end{proof}

\begin{figure}[t]
  \centering
  \begin{tikzpicture}[
    x=1.1cm,y=1.0cm,
    qubit/.style={circle,draw,inner sep=1pt,font=\scriptsize},
    check/.style={rectangle,draw,inner sep=2pt,font=\scriptsize},
    rel/.style={rectangle,draw,rounded corners,inner sep=2pt,font=\scriptsize},
    every node/.style={font=\scriptsize},
    >=Stealth
  ]
    % Data-layer CSS Tanner graph (X checks only for illustration)
    \node[qubit] (q1) at (0,0) {$q_1$};
    \node[qubit] (q2) at (1,0) {$q_2$};
    \node[qubit] (q3) at (2,0) {$q_3$};
    \node[qubit] (q4) at (3,0) {$q_4$};

    \node[check] (c1) at (0.5,1) {$S_1^X$};
    \node[check] (c2) at (1.5,1) {$S_2^X$};
    \node[check] (c3) at (2.5,1) {$S_3^X$};

    % Edges (example pattern)
    \draw (c1) -- (q1);
    \draw (c1) -- (q2);
    \draw (c2) -- (q2);
    \draw (c2) -- (q3);
    \draw (c3) -- (q3);
    \draw (c3) -- (q4);

    \node[above=0.2cm of c2] (HXlabel) {$H_X$};

    % Syndrome bits (measurement outcomes of S_i^X)
    \node[qubit] (s1) at (0.5,2.2) {$s_1$};
    \node[qubit] (s2) at (1.5,2.2) {$s_2$};
    \node[qubit] (s3) at (2.5,2.2) {$s_3$};

    % Relations among syndrome bits (parity checks for C_S^X)
    \node[rel,above=0.8cm of s2] (r1) {$R_X$};

    % Edges from syndrome bits to relation node
    \draw (r1) -- (s1);
    \draw (r1) -- (s2);
    \draw (r1) -- (s3);

    % Arrows indicating "measurement" and "syndrome code"
    \draw[->] (c1.north) .. controls (0.3,1.6) .. (s1.south);
    \draw[->] (c2.north) .. controls (1.5,1.6) .. (s2.south);
    \draw[->] (c3.north) .. controls (2.7,1.6) .. (s3.south);

    \node[anchor=west] at (-1.5,-0.1)
      {data qubits};
    \node[anchor=west] at (-1.2,0.9)
      {$X$-checks};
    \node[anchor=west] at (-1.3,2.1)
      {syndrome bits};
    \node[anchor=west] at (-0.9,3.4)
      {$C_{\mathrm{S}}^X=\mathrm{im}(H_X)$};
  \end{tikzpicture}
  \caption{Schematic view of a CSS code and its X-syndrome code. The rows of
  $H_X$ define X-type stabilizers $S_i^X$ acting on data qubits. Noiseless
  syndrome patterns $s$ lie in the code $C_{\mathrm{S}}^X=\mathrm{im}(H_X)$
  and satisfy additional relations $R_X s=0$ coming from stabilizer redundancy.}
  \label{fig:css-syndrome}
\end{figure}
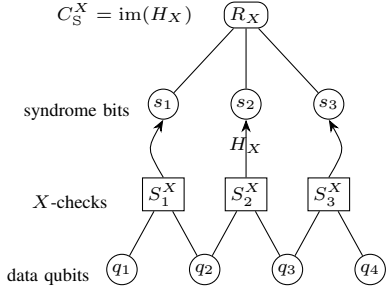

\subsection{Noise model and single-/few-shot decoding}

We use a phenomenological noise model for one error-correction cycle:
each data qubit suffers an $X$ or $Z$ error with probability $p/2$ each
(or a depolarising error with probability $p$), independently across qubits;
and each measured stabilizer outcome (syndrome bit) is flipped independently
with probability $q$. A conventional \emph{multi-shot} scheme repeats
syndrome measurement $R\ge 1$ times per cycle and uses temporal redundancy to
suppress measurement errors. A \emph{single-shot} scheme uses $R=1$ round and
relies entirely on the redundancy of the stabilizer set. A \emph{few-shot}
scheme uses small $R>1$.

In this work we analyse decoders that proceed in two stages: first, they clean
the noisy syndromes using the classical codes $C_{\mathrm{S}}^X$ and
$C_{\mathrm{S}}^Z$ (e.g.\ via bounded-distance decoding or BP), and then they
feed the cleaned syndromes to a data decoder for the underlying CSS code; see
Fig.~\ref{fig:single-shot-vs-repeated}.

\begin{remark}[Noise model vs circuit-level noise]
Our simulations and bounds are based on independent data errors and
independent bit flips on syndrome bits. For circuit-level noise with faulty
gates and measurements, additional space-time correlations (e.g.\ hook errors)
appear and can further degrade the performance of codes with small
$d_{\mathrm{S}}$, e.g., $d_{\mathrm{S}}=2$ cases. A full
circuit-level analysis is left for future work.
\end{remark}

%%%%%%%%%%%%%%%%%%%%%%%%%%%%%%%%%%%%%%%%%%%%%%%%%%%%
\section{Redundancy and Syndrome Codes in Coprime BB Codes}
\label{sec:bb-syndrome}

We now specialise the discussion to coprime BB codes. Let $\ell,m$ be coprime
integers, $N=\ell m$, and
\[
  R = \mathbb{F}_2[z]/(z^N-1).
\]
Under the standard isomorphism between the bivariate group algebra and $R$
(see, e.g.,~\cite{postema2025thesis}), BB codes are specified by two trinomials
$a(z),b(z)\in R$ and parity-check matrices
\begin{equation}
  H_X = [A\;B],\qquad H_Z = [B^\mathsf{T}\;A^\mathsf{T}],
\end{equation}
where $A,B$ are $N\times N$ circulant matrices corresponding to $a(z),b(z)$.
The transpose corresponds to the involution $z\mapsto z^{-1}$ in $R$, which
reverses exponents but leaves the cyclic codes we consider invariant as sets;
we therefore suppress reciprocal polynomials in the notation and work at the
code level.

Define
\[
  g(z) := \gcd(a(z),b(z),z^N-1)
\]
and the associated \emph{check polynomial}
\[
  h(z) := \frac{z^N-1}{g(z)}.
\]
Postema~\cite{postema2025thesis} shows that $g(z)$ controls the logical
dimension:
\begin{equation}
  k = 2\deg g(z),\qquad
  \rank(H_X)=\rank(H_Z)=N-\deg g(z).
  \label{eq:bb-dimension}
\end{equation}

\subsection{Redundancy and syndrome codes}

We collect the BB-specific consequences for redundancy and syndrome codes.

\begin{theorem}[Redundancy and syndrome codes]
\label{thm:redundancy}
For a coprime BB code of length $n=2N$ with $g(z)$ as above:
\begin{enumerate}
  \item The numbers of independent redundant X- and Z-stabilisers are
  \[
    r_X = r_Z = N - \rank(H_X) = \deg g(z) = \frac{k}{2}.
  \]
  \item The X- and Z-syndrome codes $C_{\mathrm{S}}^X$ and $C_{\mathrm{S}}^Z$
  are binary cyclic $[N,N-\deg g(z)]$ codes generated by $g(z)$:
  \[
    C_{\mathrm{S}}^X = C_{\mathrm{S}}^Z
    = \{\, \lambda(z) g(z) : \lambda(z)\in R \,\}.
  \]
  \item The space of stabiliser relations (redundant-check combinations) is
  a cyclic $[N,\deg g(z)]$ code generated (up to reciprocal) by the check
  polynomial $h(z)=(z^N-1)/g(z)$.
\end{enumerate}
\end{theorem}

\begin{proof}
Equation~\eqref{eq:bb-dimension} implies
$r_X = N-\rank(H_X) = \deg g(z)$ and $r_Z=\deg g(z)$, giving item~1.

Because $A$ and $B$ are circulant, the rows of $H_X$ are cyclic shifts of
$(a(z),b(z))$ in $R^2$, so $C_{\mathrm{S}}^X=\mathrm{im}(H_X)$ is invariant
under cyclic shifts and hence cyclic of length $N$. The ideal it generates in
$R$ is $\langle g(z)\rangle$, so $C_{\mathrm{S}}^{X/Z}$ are cyclic
$[N,N-\deg g(z)]$ codes with generator polynomial $g(z)$, proving item~2
(see also Lemma~\ref{lem:syndrome-basic}).

The relations among stabilisers correspond to vectors
$u\in\F_2^N$ with $u H_X = 0$, i.e.\ to row-space elements of $R_X$ in
Lemma~\ref{lem:syndrome-basic}. Their images in $R$ form the orthogonal
complements of $C_{\mathrm{S}}^{X/Z}$. For binary cyclic codes, the dual of a
code generated by $g(z)$ is again cyclic and generated (up to reciprocal) by
the check polynomial $h(z)=(z^N-1)/g(z)$, with dimension $\deg g(z)$;
see, e.g.,~\cite[Ch.~7]{macwilliams1977theory}. This yields item~3.
\end{proof}

A simple consequence is a rate--redundancy equality:
\begin{equation}
  \frac{r_X}{N} = \frac{r_Z}{N} = \frac{k}{n},
\end{equation}
i.e.\ in coprime BB codes the fraction of redundant stabilisers per type equals
the quantum rate.

\subsection{Distance bounds for the syndrome code}

Let $d_{\mathrm{S}}^X$ and $d_{\mathrm{S}}^Z$ denote the minimum distances of
the syndrome codes $C_{\mathrm{S}}^X$ and $C_{\mathrm{S}}^Z$ generated by
$g(z)$. The BCH bound gives an algebraic handle on these distances.

\begin{theorem}[BCH bound for $C_{\mathrm{S}}^X$]
\label{thm:BCH}
Let $\alpha$ be a primitive $N$-th root of unity in an extension field of
$\mathbb{F}_2$. Suppose there exist integers $b\ge 0$ and $\delta\ge 2$ such
that
\[
  g(\alpha^b) = g(\alpha^{b+1}) = \cdots = g(\alpha^{b+\delta-2}) = 0.
\]
Then the syndrome code $C_{\mathrm{S}}^X$ has distance
$d_{\mathrm{S}}^X \ge \delta$. An analogous statement holds for
$C_{\mathrm{S}}^Z$.
\end{theorem}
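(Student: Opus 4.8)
The plan is to recognize that this is essentially a restatement of the classical BCH bound applied to the cyclic syndrome code $C_{\mathrm{S}}^X$, so the strategy is to translate the hypothesis on the roots of $g(z)$ into the standard BCH setup and then invoke the classical theorem. By item~2 of Theorem~\ref{thm:redundancy}, $C_{\mathrm{S}}^X$ is a binary cyclic $[N,N-\deg g]$ code with generator polynomial $g(z)$. A nonzero codeword corresponds to a polynomial $c(z)=\lambda(z)g(z)$ with $\deg c < N$, and the weight of the codeword equals the number of nonzero coefficients of $c(z)$. The key structural fact is that the roots of $g(z)$ among the powers of $\alpha$ are exactly the \emph{defining zeros} of the cyclic code: every codeword $c(z)$ is divisible by $g(z)$, hence $c(\alpha^{b}) = c(\alpha^{b+1}) = \cdots = c(\alpha^{b+\delta-2}) = 0$.

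First I would make the divisibility-to-evaluation step explicit: since $g(\alpha^{b+j})=0$ for $j=0,1,\dots,\delta-2$ and $g \mid c$ in $R=\F_2[z]/(z^N-1)$, every codeword vanishes at these $\delta-1$ consecutive powers of $\alpha$. Writing $c(z)=\sum_{i=0}^{N-1} c_i z^i$, the conditions $c(\alpha^{b+j})=0$ form a linear system whose coefficient matrix, restricted to the support of $c$, is a generalized Vandermonde matrix in the distinct nodes $\{\alpha^{i} : c_i \neq 0\}$. Next I would argue by contradiction: if a nonzero codeword had weight $w \le \delta-1$, the $w$ columns of this Vandermonde system indexed by the support would be linearly dependent in a way forbidden by the nonvanishing of the Vandermonde determinant $\prod_{r<s}(\alpha^{i_s}-\alpha^{i_r})$, which is nonzero because the $\alpha^{i}$ are distinct (as $\alpha$ has multiplicative order $N$ and the exponents lie in $\{0,\dots,N-1\}$). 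This forces all the $c_i=0$ on the support, contradicting $c\neq 0$. Hence every nonzero codeword has weight at least $\delta$, i.e.\ $d_{\mathrm{S}}^X \ge \delta$.

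In practice, rather than reproving the Vandermonde argument from scratch, I would cite the classical BCH bound, e.g.~\cite[Ch.~7]{macwilliams1977theory}: a cyclic code of length $N$ whose generator polynomial has $\delta-1$ consecutive powers $\alpha^{b},\dots,\alpha^{b+\delta-2}$ among its roots has minimum distance at least $\delta$. The only thing to verify before applying it is that the hypothesis of the theorem is precisely this consecutive-roots condition on $g(z)$, together with the identification of $g(z)$ as the generator polynomial supplied by Theorem~\ref{thm:redundancy}. The statement for $C_{\mathrm{S}}^Z$ follows identically, since $C_{\mathrm{S}}^Z=C_{\mathrm{S}}^X$ as a set by item~2 of Theorem~\ref{thm:redundancy}, so $d_{\mathrm{S}}^Z=d_{\mathrm{S}}^X\ge\delta$ under the same root condition.

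The main obstacle is not the logic of the BCH argument, which is standard, but rather a bookkeeping subtlety about which powers of $\alpha$ actually appear as roots of $g(z)$. Since $g(z) \mid z^N-1$ and we work over $\F_2$, the roots of $g$ are closed under the Frobenius action $x \mapsto x^2$, so they come in cyclotomic cosets modulo $N$. The delicate point is ensuring that a \emph{consecutive} run $b, b+1, \dots, b+\delta-2$ of exponents genuinely lies in the zero set of $g$; this is a constraint on how the cyclotomic cosets composing $\deg g$ are distributed, and it is exactly the quantity one must engineer when designing the code for large $d_{\mathrm{S}}^X$. I would flag that the theorem gives only a lower bound, and that the true minimum distance may exceed $\delta$; the bound is useful precisely because it is computable directly from the factorization of $g(z)$ over $\F_2$ without enumerating codewords.
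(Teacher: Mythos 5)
Your proof is correct and takes essentially the same route as the paper: identify $C_{\mathrm{S}}^X$ as the binary cyclic length-$N$ code with generator polynomial $g(z)$ (item~2 of Theorem~\ref{thm:redundancy}) and then invoke the classical BCH bound of~\cite[Ch.~7]{macwilliams1977theory}, exactly as the paper does. Your Vandermonde sketch and the cyclotomic-coset remark simply unpack the cited classical result and its design implications, so they add detail but not a different argument.
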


\begin{proof}
$C_{\mathrm{S}}^X$ is the binary cyclic length-$N$ code with generator
polynomial $g(z)$. The stated root condition is exactly the BCH designed
distance condition, which implies $d_{\mathrm{S}}^X \ge \delta$ for this code;
see, e.g.,~\cite[Ch.~7]{macwilliams1977theory}.
\end{proof}

Dimension and distance are linked by the Singleton bound.

\begin{theorem}[Singleton bound for $d_{\mathrm{S}}$]
\label{thm:singleton-syndrome}
For a coprime BB code with parameters $[[n,k,d]]$ and $n=2N$, the syndrome
distances satisfy
\[
  d_{\mathrm{S}}^X \le \deg g(z)+1 = \frac{k}{2}+1,\qquad
  d_{\mathrm{S}}^Z \le \deg g(z)+1 = \frac{k}{2}+1.
\]
Equivalently, the one-round measurement radii
$t_{\mathrm{S}}^{X/Z}=\lfloor(d_{\mathrm{S}}^{X/Z}-1)/2\rfloor$ obey
\[
  t_{\mathrm{S}}^{X/Z} \le \left\lfloor\frac{\deg g(z)}{2}\right\rfloor
  = \left\lfloor\frac{k}{4}\right\rfloor.
\]
\end{theorem}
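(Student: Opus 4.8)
The plan is to reduce both claims to the classical Singleton bound applied to the syndrome code. From Theorem~\ref{thm:redundancy}, item~2, we already know that $C_{\mathrm{S}}^X$ (and identically $C_{\mathrm{S}}^Z$) is a binary cyclic code of length $N$ and dimension $\dim C_{\mathrm{S}}^X = N - \deg g(z)$. The Singleton bound for any linear $[n,k]$ code states $d \le n - k + 1$; applying it here with block length $N$ and dimension $N - \deg g(z)$ immediately yields
\[
  d_{\mathrm{S}}^X \le N - \bigl(N - \deg g(z)\bigr) + 1 = \deg g(z) + 1,
\]
and the identical argument applies to $C_{\mathrm{S}}^Z$ since it carries the same parameters.

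To recast this in terms of $k$, I would invoke the dimension identity $k = 2\deg g(z)$ from Equation~\eqref{eq:bb-dimension}, giving $\deg g(z) = k/2$ and hence $d_{\mathrm{S}}^{X/Z} \le k/2 + 1$, which is the first displayed claim.

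For the bound on the measurement radius, the plan is to propagate the distance inequality through the floor function. Since $d_{\mathrm{S}}^{X/Z} \le \deg g(z) + 1$ with both sides integer-valued, we have $d_{\mathrm{S}}^{X/Z} - 1 \le \deg g(z)$, and monotonicity of the map $x \mapsto \lfloor x/2 \rfloor$ on the integers gives
\[
  t_{\mathrm{S}}^{X/Z}
  = \left\lfloor \frac{d_{\mathrm{S}}^{X/Z}-1}{2} \right\rfloor
  \le \left\lfloor \frac{\deg g(z)}{2} \right\rfloor
  = \left\lfloor \frac{k}{4} \right\rfloor,
\]
completing the second claim.

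I do not anticipate a genuine obstacle: the entire statement is a specialisation of the Singleton bound once the cyclic structure and dimension of the syndrome code are supplied by Theorem~\ref{thm:redundancy}. The only point requiring mild care is the floor-function step, where one must use that both $d_{\mathrm{S}}^{X/Z}-1$ and $\deg g(z)$ are integers so that the inequality survives rounding; this follows from the integer monotonicity of $\lfloor \cdot/2 \rfloor$ and causes no difficulty.
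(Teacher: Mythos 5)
Your proposal is correct and follows essentially the same route as the paper: both invoke Theorem~\ref{thm:redundancy} to identify $C_{\mathrm{S}}^{X/Z}$ as $[N,N-\deg g(z)]$ codes, apply the classical Singleton bound, and then use $k=2\deg g(z)$ together with the floor function for the radius bound. Your write-up merely makes explicit the integer-monotonicity step for the floor, which the paper leaves implicit.
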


\begin{proof}
$C_{\mathrm{S}}^{X/Z}$ are binary
$[N,N-\deg g(z),d_{\mathrm{S}}^{X/Z}]$ codes by
Theorem~\ref{thm:redundancy}, so the Singleton bound gives
$d_{\mathrm{S}}^{X/Z} \le N-(N-\deg g(z))+1=\deg g(z)+1$. The bound on
$t_{\mathrm{S}}^{X/Z}$ follows from
$t_{\mathrm{S}}^{X/Z}=\lfloor(d_{\mathrm{S}}^{X/Z}-1)/2\rfloor$ and
$\deg g(z)=k/2$.
\end{proof}

Thus, in the \emph{coprime} BB subclass, high quantum rate implies large
$\deg g(z)$, which in turn limits how large $d_{\mathrm{S}}^{X/Z}$ and
$t_{\mathrm{S}}^{X/Z}$ can be at fixed blocklength. This structural tradeoff
is one reason why %our 
small coprime BB examples remain measurement-noise dominated despite nontrivial stabiliser redundancy.

%%%%%%%%%%%%%%%%%%%%%%%%%%%%%%%%%%%%%%%%%%%%%%%%%%%%
\section{Bounds on Single- and Few-Shot Decoding}
\label{sec:bounds}

The results of the previous section show that for coprime BB codes the
syndrome codes $C_{\mathrm{S}}^{X/Z}$ are classical cyclic codes generated by
$g(z)$ and that their distance can be bounded using standard cyclic-code
techniques. We now turn this structure into abstract bounds on one- and
few-round decoding, under the assumption of \emph{ideal} bounded-distance
classical decoders for both the syndrome codes and the data code. In
Section~\ref{sec:numerics} we approximate these ideal decoders by either
lookup/ML decoders (for very small codes) or BP+OSD and BP decoders.

\subsection{One-round decoder and correctness condition}

Let $d_X$ and $d_Z$ be the usual X- and Z-distances of the CSS code and let
\[
  t_X = \left\lfloor\frac{d_X-1}{2}\right\rfloor,\qquad
  t_Z = \left\lfloor\frac{d_Z-1}{2}\right\rfloor
\]
be the guaranteed correction radii for ideal bounded-distance data decoders.
Similarly, let
\[
  t_{\mathrm{S}}^X = \left\lfloor\frac{d_{\mathrm{S}}^X-1}{2}\right\rfloor,
  \qquad
  t_{\mathrm{S}}^Z = \left\lfloor\frac{d_{\mathrm{S}}^Z-1}{2}\right\rfloor
\]
be the correction radii for ideal bounded-distance decoders on
$C_{\mathrm{S}}^X$ and $C_{\mathrm{S}}^Z$.

We consider an adversarial error model for a single error-correction cycle:
an X-error pattern $e_X \in \mathbb{F}_2^n$ with
$\mathrm{wt}(e_X)\le w_X$, an X-syndrome error pattern
$\nu_X \in \mathbb{F}_2^{m_X}$ with $\mathrm{wt}(\nu_X)\le u_X$, and
analogous Z-error and Z-syndrome patterns $(e_Z,\nu_Z)$ of weights
$w_Z,u_Z$. The ideal X-syndrome is
$s_X = H_Z e_X^\mathsf{T} \in C_{\mathrm{S}}^X$ and the measured one is
$s_X' = s_X + \nu_X$; similarly for $s_Z,s_Z'$.

We say a classical decoder for a code $C\subseteq\F_2^m$ is \emph{bounded-distance of radius $t$} if for every codeword $c\in C$ and every error pattern $e$ with $\wt(e)\le t$ it outputs $c$ when given $c+e$.

\begin{theorem}[One-round correctness condition]
\label{thm:single-shot}
Assume ideal bounded-distance decoders on $C_{\mathrm{S}}^{X/Z}$ with radii
$t_{\mathrm{S}}^{X/Z}$ and an ideal bounded-distance data decoder with radii
$t_{X/Z}$ as above. Consider a one-round decoder which
\begin{enumerate}
  \item measures all stabilisers once, obtaining noisy syndromes
        $s_X',s_Z'$;
  \item applies the syndrome decoders to obtain cleaned syndromes
        $\hat{s}_X,\hat{s}_Z$; and
  \item applies the data decoder to $(\hat{s}_X,\hat{s}_Z)$.
\end{enumerate}
If
\[
  w_X \le t_X,\quad u_X \le t_{\mathrm{S}}^X,\qquad
  w_Z \le t_Z,\quad u_Z \le t_{\mathrm{S}}^Z,
\]
then the decoder recovers the correct data error up to stabilizers.
\end{theorem}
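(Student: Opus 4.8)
The plan is to exploit the two-stage structure of the decoder together with the fact, recorded in Lemma~\ref{lem:syndrome-basic}, that every noiseless syndrome lies in the corresponding syndrome code. Because the underlying code is CSS, the X- and Z-sectors are completely decoupled: $\nu_X$ corrupts only $s_X'$ and $e_X$ affects only the X-syndrome, and symmetrically for the Z-sector. I would therefore prove the X-half of the statement in full and obtain the Z-half by the symmetric substitution $X\leftrightarrow Z$.

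First I would dispose of the syndrome-cleaning stage. The essential input is that the ideal syndrome $s_X$ is a genuine codeword of $C_{\mathrm{S}}^X$ (Lemma~\ref{lem:syndrome-basic}; recall that for coprime BB codes $C_{\mathrm{S}}^X=C_{\mathrm{S}}^Z$ by Theorem~\ref{thm:redundancy}), so that $s_X' = s_X+\nu_X$ is a codeword-plus-error configuration for the classical code $C_{\mathrm{S}}^X$. Since the hypothesis gives $\wt(\nu_X)\le u_X\le t_{\mathrm{S}}^X$, the defining property of a bounded-distance decoder of radius $t_{\mathrm{S}}^X$ applies verbatim and returns exactly $\hat{s}_X=s_X$. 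Thus syndrome cleaning is \emph{exact}, and stage~3 receives the true noiseless syndrome as though the measurement had been perfect.

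Next I would analyse the data-decoding stage in this now noise-free-syndrome regime. Because $\wt(e_X)\le w_X\le t_X$, the ideal bounded-distance data decoder returns some $\hat{e}_X$ with $\wt(\hat{e}_X)\le t_X$ and the same syndrome $s_X$. The difference $e_X+\hat{e}_X$ then has trivial syndrome and weight at most $2t_X\le d_X-1<d_X$; by definition of the X-distance it cannot be a nontrivial logical operator and must therefore be a stabiliser. This is precisely ``recovery up to stabilisers,'' which is all that fault tolerance requires. Repeating the identical chain in the Z-sector under $u_Z\le t_{\mathrm{S}}^Z$ and $w_Z\le t_Z$ yields $\hat{s}_Z=s_Z$ and recovery of $e_Z$ up to a stabiliser, and the two corrections combine without interference in the final Pauli-frame update.

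I expect the only delicate point to be the very first one, which is also the conceptual core: the argument works only because the noiseless syndrome is a codeword of $C_{\mathrm{S}}^X$, and this is exactly what licenses treating syndrome cleaning as ordinary classical bounded-distance decoding and cleanly separating it from the subsequent data decoding. The remaining steps are a routine chaining of bounded-distance guarantees together with the triangle-inequality/distance argument for ``up to stabilisers.'' A secondary subtlety worth flagging is that the guarantee is equivalence up to a stabiliser rather than exact recovery of $e_X$, so I would take care to phrase the data-decoder conclusion in coset terms throughout.
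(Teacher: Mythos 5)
Your proof is correct and follows essentially the same route as the paper's: exact recovery of the ideal syndrome $\hat{s}_X=s_X$ via the bounded-distance property of the syndrome decoder applied to the codeword-plus-error $s_X+\nu_X$, followed by data decoding on the now-noiseless syndrome, with the X and Z sectors handled symmetrically. The only difference is that you derive the ``up to stabilizers'' conclusion explicitly from the weight argument $\wt(e_X+\hat{e}_X)\le 2t_X<d_X$, whereas the paper folds that guarantee directly into its definition of the ideal bounded-distance data decoder; this is a harmless (indeed slightly more self-contained) unpacking of the same assumption.
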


\begin{proof}
We argue for the X sector; the Z sector is identical. Let $e_X$ be the X-error
pattern with $\wt(e_X)\le t_X$, and let $s_X = H_Z e_X^\mathsf{T}$ be its ideal
syndrome. By definition $s_X\in C_{\mathrm{S}}^X$. The measured syndrome is
$s_X' = s_X + \nu_X$ with $\wt(\nu_X)\le u_X\le t_{\mathrm{S}}^X$. Since $s_X$ is
a codeword of $C_{\mathrm{S}}^X$ and lies within Hamming distance $t_{\mathrm{S}}^X$
of $s_X'$, the ideal bounded-distance decoder for $C_{\mathrm{S}}^X$ must output
$s_X$ uniquely. Thus the cleaned X-syndrome entering the data decoder is exactly
the ideal one.

By assumption, the X part of the data decoder corrects any X-error of weight at
most $t_X$ when given the corresponding ideal syndrome. As $\wt(e_X)\le t_X$ and
the input syndrome is $s_X$, the decoder outputs some $\tilde{e}_X$ equivalent
to $e_X$ up to stabilizers. Applying the same reasoning to $e_Z,s_Z,\nu_Z$ and
$C_{\mathrm{S}}^Z$ shows that the Z part of the data decoder outputs
$\tilde{e}_Z$ equivalent to $e_Z$ up to stabilizers. Hence the combined Pauli
error $(\tilde{e}_X,\tilde{e}_Z)$ differs from $(e_X,e_Z)$ by a stabilizer, so
the overall correction restores the logical state.
\end{proof}

We emphasise again that Theorem~\ref{thm:single-shot} is a finite-radius
guarantee for ideal decoders and a single round; it does not assert an
asymptotic single-shot threshold in the sense of full fault-tolerant
schemes.

\subsection{Comparison with repeated-round decoding}

We now compare this ideal one-round scheme with a conventional repeated-round
baseline in which each stabiliser is measured $R$ times per cycle and decoded
by majority vote per check (see illustrative Fig.~\ref{fig:single-shot-vs-repeated}). For independent measurement flips, this is
equivalent to a length-$R$ repetition code in the time direction with
per-check correction radius
\begin{equation}
  t_{\mathrm{time}} = \left\lfloor\frac{R-1}{2}\right\rfloor.
\end{equation}

\begin{theorem}[One-round vs.\ repeated-round measurement radii]
\label{thm:compare-repeated}
Let $t_{\mathrm{time}}=\lfloor(R-1)/2\rfloor$ be the per-check measurement
radius of an $R$-round repetition baseline with majority vote. Any one-round
scheme whose syndrome decoders have radii
$t_{\mathrm{S}}^{X/Z}\ge t_{\mathrm{time}}$ is at least as robust, per check,
to independent measurement errors as this baseline: every pattern of time-like
flips corrected by the baseline is also corrected by the one-round scheme.
For coprime BB codes, Theorem~\ref{thm:singleton-syndrome} implies that this
requires
\begin{equation}
  \left\lfloor\frac{R-1}{2}\right\rfloor
  \le t_{\mathrm{S}}^{X/Z}
  \le \left\lfloor\frac{\deg g(z)}{2}\right\rfloor
  = \left\lfloor\frac{k}{4}\right\rfloor.
  \label{eq:bb-repeated-condition}
\end{equation}
\end{theorem}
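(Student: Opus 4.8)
The plan is to prove this as essentially a tautological consequence of the repetition-code structure of the time direction, combined with the Singleton bound already established in Theorem~\ref{thm:singleton-syndrome}. The statement has two logically independent parts: first, a containment claim (any time-like flip pattern corrected by the $R$-round baseline is also corrected by a one-round scheme with $t_{\mathrm{S}}^{X/Z}\ge t_{\mathrm{time}}$); and second, the chain of inequalities~\eqref{eq:bb-repeated-condition} specialising this to coprime BB codes. I would treat them in that order.

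For the containment, I would first make precise what "corrected by the baseline" means. Under independent measurement flips, the $R$ repeated outcomes of a single check form a length-$R$ repetition code, and majority vote corrects any flip pattern of weight at most $t_{\mathrm{time}}=\lfloor(R-1)/2\rfloor$ on that check. The one-round scheme, by contrast, sees each check only once but exploits the spatial redundancy encoded in $C_{\mathrm{S}}^{X/Z}$; by Theorem~\ref{thm:single-shot} its syndrome decoder corrects any measurement-error pattern $\nu$ with $\wt(\nu)\le t_{\mathrm{S}}^{X/Z}$. The heart of the argument is therefore a weight comparison: I would argue that any measurement-error configuration the baseline corrects corresponds, at the level of the per-check error budget, to a pattern of weight at most $t_{\mathrm{time}}$, and hence—provided $t_{\mathrm{S}}^{X/Z}\ge t_{\mathrm{time}}$—also falls within the correction radius of the one-round syndrome decoder. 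The containment then follows directly from the nesting of bounded-distance correction balls: a larger radius corrects a superset of patterns.

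For the second part, I would simply chain the hypothesis $t_{\mathrm{S}}^{X/Z}\ge t_{\mathrm{time}}=\lfloor(R-1)/2\rfloor$ with the upper bound $t_{\mathrm{S}}^{X/Z}\le\lfloor\deg g(z)/2\rfloor=\lfloor k/4\rfloor$ supplied verbatim by Theorem~\ref{thm:singleton-syndrome}. Reading the two inequalities together yields~\eqref{eq:bb-repeated-condition} immediately, with no further computation.

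\textbf{The main obstacle} I anticipate is not computational but definitional: making the phrase "at least as robust, \emph{per check}" rigorous without overclaiming. The subtlety is that the baseline and the one-round scheme combat measurement noise through genuinely different mechanisms—temporal repetition versus spatial stabiliser redundancy—so "every pattern corrected by the baseline is also corrected by the one-round scheme" must be interpreted at the right granularity. A naive reading comparing global flip patterns could fail, because the two schemes index their measurement bits differently (time copies of one check versus distinct checks at one time). I would guard against this by stating the comparison strictly per check, treating the $R$ time-copies of a fixed check as the alphabet over which $t_{\mathrm{time}}$ is defined, and by emphasising—as the theorem's phrasing already does—that this is a correction-radius containment, not a claim about logical failure probabilities under the full phenomenological model. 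The honest scope of the result is that the radii nest; any stronger statement would require the per-check error distributions to match, which they do not in general.
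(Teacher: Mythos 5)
Your proposal takes essentially the same route as the paper's proof: the $R$-round majority vote is treated as a length-$R$ repetition code with per-check radius $t_{\mathrm{time}}=\lfloor(R-1)/2\rfloor$, the containment follows from nesting of bounded-distance correction radii when $t_{\mathrm{S}}^{X/Z}\ge t_{\mathrm{time}}$, and the chain~\eqref{eq:bb-repeated-condition} is obtained by combining this hypothesis with the Singleton bound of Theorem~\ref{thm:singleton-syndrome}, exactly as the paper does. Your explicit caveat about the per-check granularity (time copies of one check versus distinct checks at one time, and radius containment rather than a probabilistic claim) is, if anything, handled more carefully in your write-up than in the paper's own proof, which asserts the same weight comparison without flagging that subtlety.
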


\begin{proof}
For each stabiliser, $R$ repeated measurements with majority vote implement a
classical repetition code of length $R$ and distance $R$, hence correct any
set of at most $t_{\mathrm{time}}=\lfloor(R-1)/2\rfloor$ flips on that
stabiliser. In a one-round scheme, measurement errors on all checks are
jointly decoded by the syndrome decoder. If its guaranteed decoding radius is
$t_{\mathrm{S}}^{X/Z}\ge t_{\mathrm{time}}$, then any error pattern with at
most $t_{\mathrm{time}}$ flips per check lies within the guaranteed radius of
the decoder and is therefore corrected. Thus any measurement pattern that the
repetition baseline can correct is also corrected by the one-round scheme.
For coprime BB codes, Theorem~\ref{thm:singleton-syndrome} bounds
$t_{\mathrm{S}}^{X/Z}$ by $\lfloor\deg g(z)/2\rfloor=\lfloor k/4\rfloor$,
which yields~\eqref{eq:bb-repeated-condition}.
\end{proof}

In our simulations below, we will predominantly set $R=3$. For the improved BB
codes with $d_{\mathrm{S}}^X=4$ and $10$, we have $t_{\mathrm{S}}^X=1$ and
$4$, which comfortably satisfy the condition for matching a three-round
baseline per check; for smaller BB codes with $d_{\mathrm{S}}^X=2$, we have
$t_{\mathrm{S}}^X=0$, so one-round decoding cannot match even $R=3$ in
principle, in agreement with the numerical results in that
regime.

\begin{figure}[t]
  \centering
  \begin{tikzpicture}[
    >=Stealth,
    every node/.style={font=\scriptsize},
    box/.style={rectangle,draw,rounded corners,
                minimum width=2.2cm,minimum height=0.7cm, align=center} % added align=center
  ]
    % --- Repeated-round scheme (Left) ---
    \node[box] (r1) at (-2.0, 3.5) {meas.\ round 1};
    \node[box] (r2) at (-2.0, 2.5) {meas.\ round 2};
    \node[box] (r3) at (-2.0, 1.5) {meas.\ round 3};
    % Split the decoder to match the right side structure
    \node[box] (rmaj) at (-2.0, 0.5) {majority vote\\(temporal)}; 
    \node[box] (rdec) at (-2.0, -0.8) {data decoder\\(BP+OSD)};

    \draw[->] (r1.south) -- (r2.north);
    \draw[->] (r2.south) -- (r3.north);
    \draw[->] (r3.south) -- (rmaj.north);
    \draw[->] (rmaj.south) -- (rdec.north);

    \node[above=0.1cm] at (-2.0, 3.9) {\textbf{Repeated ($R=3$)}};

    % --- Single-shot scheme (Right) ---
    \node[box] (s1)  at (2.0, 3.5) {meas.\ round 1};
    % Empty space at y=2.5 and y=1.5 to visualize time savings
    
    % Aligned with the processing step on the left
    \node[box] (s2)  at (2.0, 0.5) {syndrome decoder\\(spatial)}; 
    \node[box] (s3)  at (2.0, -0.8) {data decoder\\(BP+OSD)};

    % Long arrow to show the skip in time
    \draw[->] (s1.south) -- (s2.north); 
    \draw[->] (s2.south) -- (s3.north);

    \node[above=0.1cm] at (2.0, 3.9) {\textbf{Single-shot ($R=1$)}};
    
    % Optional: Dashed lines to indicate time steps
    % \draw[dotted, gray] (-3.5, 3.0) -- (3.5, 3.0) node[right, black] {Step 1};
    % \draw[dotted, gray] (-3.5, 2.0) -- (3.5, 2.0) node[right, black] {Step 2};
  \end{tikzpicture}
  \caption{Comparison of decoding schedules. Left: Standard repetition ($R=3$) relies on temporal majority voting. Right: Single-shot ($R=1$) uses spatial redundancy to clean errors immediately, eliminating wait times.}
  \label{fig:single-shot-vs-repeated}
\end{figure}
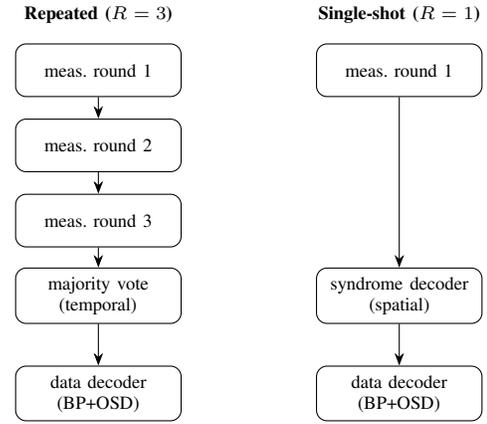

%%%%%%%%%%%%%%%%%%%%%%%%%%%%%%%%%%%%%%%%%%%%%%%%%%%%
\section{Numerical Results}
\label{sec:numerics}

We evaluate the proposed single-shot design rules using the phenomenological noise model described in Section~II-B. We focus on two coprime BB codes constructed to have enhanced syndrome distance $d_{\mathrm{S}}$ compared to the baseline $d_{\mathrm{S}}=2$ cases found in~\cite{postema2025thesis}.

\textbf{Code 1 ($N=21$):} Generated by $a(z)=b(z)=1+z^3+z^9$. Parameters: $[[42, 18, d]]$, syndrome distance $d_{\mathrm{S}}^X=4$ ($t_{\mathrm{S}}^X=1$).

\textbf{Code 2 ($N=63$):} Generated by $a(z)=1+z^{33}+z^{57}$, $b(z)=1+z^6+z^{39}$. Parameters: $[[126, 30, d]]$, syndrome distance $d_{\mathrm{S}}^X=10$ ($t_{\mathrm{S}}^X=4$).

Decoders are implemented as follows:
\begin{itemize}
  \item \textbf{Syndrome Decoder:} Belief Propagation (BP) on the parity-check matrix $H_{\mathrm{syn}}$ defined by the relation polynomial $h(z) = (z^N-1)/g(z)$.
  \item \textbf{Data Decoder:} BP+OSD-2 on the data parity-check matrix $H_X$, taking the cleaned syndrome as input.
\end{itemize}

%--------------------------------------------------
% PLOT 1: SYNDROME PERFORMANCE
%--------------------------------------------------
\begin{figure}[t!]
    \centering
    \includegraphics[width=\linewidth]{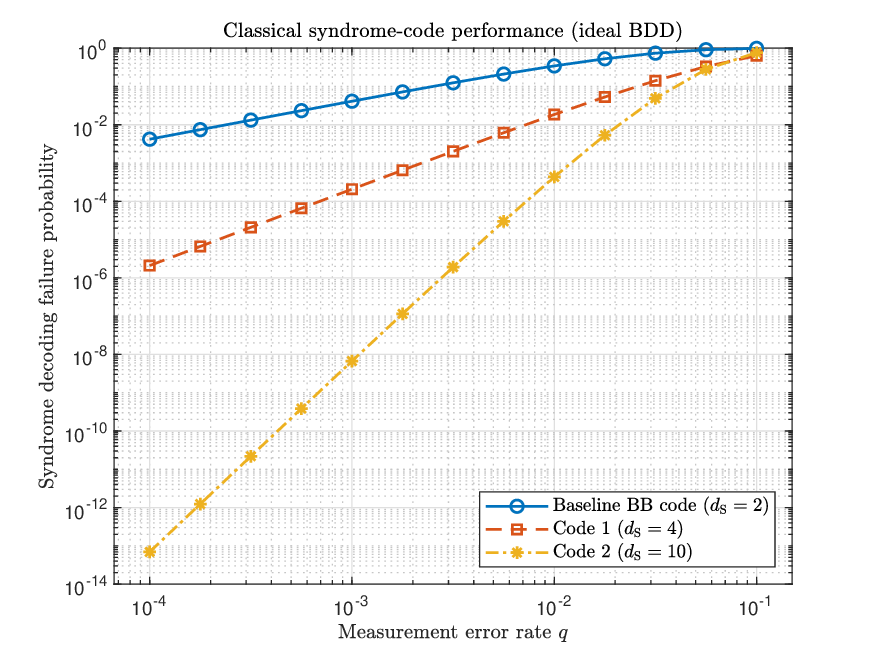} 
    % REPLACE THE ABOVE LINE WITH YOUR PLOT FILE
    % \begin{tikzpicture}
    %     \draw[help lines] (0,0) grid (8,5);
    %     \node at (4,2.5) {[PLACEHOLDER: Plot of Syndrome FER vs $q$]};
    %     \node at (4,2.0) {Curves: Code 1 ($d_S=4$), Code 2 ($d_S=10$), Baseline ($d_S=2$)};
    % \end{tikzpicture}
    \caption{Performance of the classical syndrome codes $C_{\mathrm{S}}^X$. %We plot the frame error rate (FER) of the syndrome decoder against the measurement error rate $q$. The steep slope of Code 2 confirms that optimizing $g(z)$ for BCH distance successfully suppresses measurement errors before the data decoding stage.
    }
    \label{fig:syndrome-fer}
\end{figure}

% \subsection{Syndrome Decoding Performance}
% To validate Theorem~\ref{thm:BCH} independently of the data decoder, Fig.~\ref{fig:syndrome-fer} compares the Frame Error Rate (FER) of the syndrome cleaning stage alone. We verify that the baseline code ($d_{\mathrm{S}}=2$) fails linearly with $q$, offering no error suppression. In contrast, Code 1 ($d_{\mathrm{S}}=4$) and Code 2 ($d_{\mathrm{S}}=10$) show higher-order suppression of measurement errors. This confirms that the algebraic construction of $g(z)$ successfully imparts the distance properties predicted by the cyclic code bounds.
\subsection{Syndrome Decoding Performance}
To validate Theorem~\ref{thm:BCH} independently of the data decoder, Fig.~\ref{fig:syndrome-fer} compares the Frame Error Rate (FER) of the syndrome cleaning stage against the theoretical limit of an ideal bounded-distance decoder. For a syndrome code of length $N$ and minimum distance $d_{\mathrm{S}}$, the guaranteed correction radius is $t_{\mathrm{S}} = \lfloor (d_{\mathrm{S}}-1)/2 \rfloor$. The theoretical failure probability $P_{\mathrm{fail}}$ under a measurement error rate $q$ is determined by the probability that the error weight exceeds $t_{\mathrm{S}}$:
\begin{equation}
  P_{\mathrm{fail}}(q) = 1 - \sum_{i=0}^{t_{\mathrm{S}}} \binom{N}{i} q^i (1-q)^{N-i}.
\end{equation}
The results verify that the baseline code ($d_{\mathrm{S}}\!=\!2 \Rightarrow t_{\mathrm{S}}\!=\!0$) fails linearly with $q$ ($P_{\mathrm{fail}} \!\approx\! Nq$), offering no error suppression. In contrast, Code 1 ($d_{\mathrm{S}}=4$) and Code 2 ($d_{\mathrm{S}}=10$) demonstrate higher-order suppression scaling as $O(q^{t_{\mathrm{S}}+1})$. This confirms that the algebraic construction of $g(z)$ successfully imparts the distance properties predicted by the cyclic code bounds, independent of the decoding algorithm used.

%--------------------------------------------------
% PLOT 2: LOGICAL ERROR VS PHYSICAL ERROR (Pseudo-threshold)
%--------------------------------------------------
\begin{figure}[t!]
    \centering
    \includegraphics[width=\linewidth]{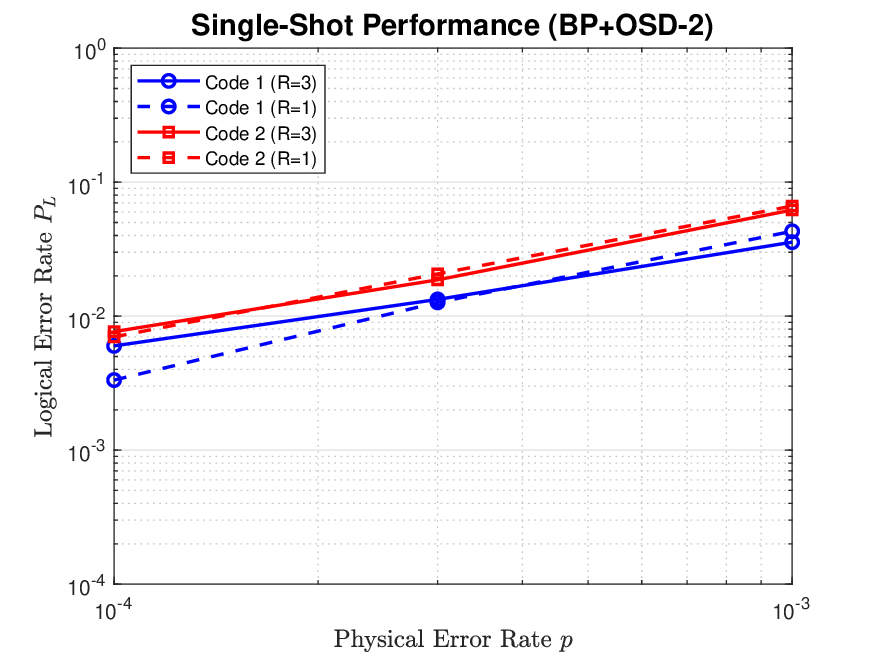}
    % REPLACE THE ABOVE LINE WITH YOUR PLOT FILE
    % \begin{tikzpicture}
    %     \draw[help lines] (0,0) grid (8,5);
    %     \node at (4,2.5) {[PLACEHOLDER: Plot of $P_L$ vs $p$]};
    %     \node at (4,2.0) {Curves: Code 2 Single-Shot ($R=1$) vs Repeated ($R=3$)};
    % \end{tikzpicture}
    \caption{Logical error rate $P_L$ vs physical error rate $p$ (with $q=p$). %for Code 2 ($N=63$). %The Single-Shot ($R=1$) performance is virtually indistinguishable from the Repeated ($R=3$) baseline in the low-error regime, demonstrating that stabilizer redundancy effectively substitutes for temporal repetition.
    }
    \label{fig:threshold}
\end{figure}

\subsection{Single-Shot vs. Repeated Decoding}
% Fig.~\ref{fig:threshold} compares the logical error rate $P_{\mathrm{L}}$ of Code 2 ($N=63$) under single-shot decoding ($R=1$) against a standard repetition baseline ($R=3$ rounds with majority vote). 

% Crucially, in the regime of $p, q \le 10^{-3}$, the single-shot curve tracks the repeated-round curve closely. This indicates that the stabilizer redundancy $r_X = \deg g(z)$ provides protection equivalent to temporal repetition, but with a $3\times$ reduction in cycle time. The curves diverge only at high $q$, where the weight of the measurement error exceeds the bounded-distance radius $t_{\mathrm{S}}^X=4$.
%%

Fig.~\ref{fig:threshold} compares single-shot ($R=1$) and repeated ($R=3$) decoding under BP+OSD-2. For both codes, the single-shot performance closely tracks the baseline. %At low error rates ($p=10^{-4}$), $R=1$ yields slightly lower logical error rates by avoiding data errors accumulated during additional rounds. 
As $p$ increases, the baseline regains a marginal advantage as measurement noise stresses the syndrome code limits. The curves diverge only at high $q$, where the weight of the measurement error exceeds the bounded-distance radius $t_{\mathrm{S}}^X=4$. Overall, the results confirm that stabilizer redundancy effectively substitutes for temporal repetition, enabling a three-fold reduction in cycle time with comparable logical fidelity.

%--------------------------------------------------
% PLOT 3: R-SWEEP (Efficiency)
%--------------------------------------------------
\begin{figure}[t!]
    \centering
    \includegraphics[width=\linewidth]{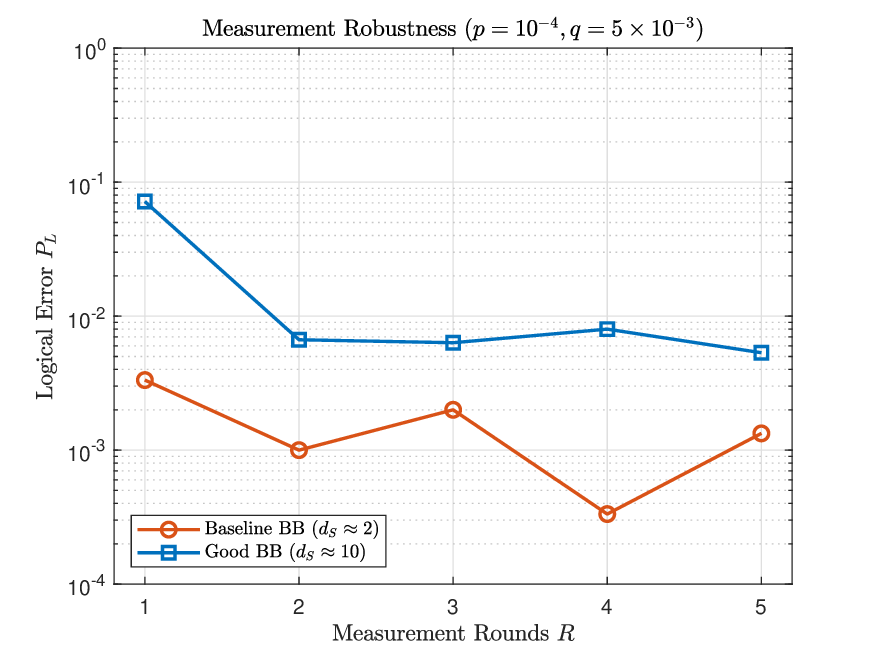}
    % REPLACE THE ABOVE LINE WITH YOUR PLOT FILE
    % \begin{tikzpicture}
    %     \draw[help lines] (0,0) grid (8,5);
    %     \node at (4,2.5) {[PLACEHOLDER: Plot of $P_L$ vs $R$]};
    %     \node at (4,2.0) {Fixed $p=10^{-3}$. Curves: Code 1 vs Code 2};
    % \end{tikzpicture}
    \caption{Logical error rate as a function of measurement rounds $R$. %For Code 2 (large $d_{\mathrm{S}}$), the error rate is nearly flat, indicating $R=1$ is sufficient. For Code 1 (lower $d_{\mathrm{S}}$), additional rounds are still beneficial, highlighting the "soft" single-shot behavior of smaller coprime BB codes.
    }
    \label{fig:rsweep}
\end{figure}

\subsection{Robustness to Cycle Reduction}
% Finally, we examine the dependency on the number of rounds $R$ in Fig.~\ref{fig:rsweep}. For a code with ideal single-shot properties, $P_{\mathrm{L}}$ should be independent of $R$ for $R \ge 1$. 
% We observe that for Code 2 ($d_{\mathrm{S}}=10$), additional rounds yield diminishing returns, confirming that the code is in the single-shot regime. Conversely, smaller codes with lower $d_{\mathrm{S}}$ still benefit from $R > 1$, validating the bound in Theorem~\ref{thm:compare-repeated}: once $d_{\mathrm{S}}$ is sufficiently large, temporal repetition becomes redundant.
%%

% Fig.~\ref{fig:rsweep} illustrates the impact of measurement rounds $R$ at $p=10^{-4}$ and $q=5\times 10^{-3}$. The `Good' code ($d_{\mathrm{S}}=10$) demonstrates true single-shot robustness: after an initial improvement from $R=1$ to $R=2$, the logical error rate plateaus ($P_{\mathrm{L}} \approx 1.3\times 10^{-2}$). This behavior validates Theorem~\ref{thm:compare-repeated},  indicating that spatial stabilizer redundancy effectively substitutes for further temporal repetition. Conversely, the 'Baseline' code ($d_{\mathrm{S}}=2$) shows no such convergence, with $P_{\mathrm{L}}$ fluctuating due to the parity constraints of majority voting. The `Good' code's higher absolute $P_{\mathrm{L}}$ is attributable to its larger blocklength ($n=126$ vs.\ $n=36$), which increases the frame error probability per cycle despite its superior measurement robustness.

Fig.~\ref{fig:rsweep} illustrates the impact of measurement rounds $R$ at $p=10^{-4}$ and $q=5\times 10^{-3}$. The `Good' code ($d_{\mathrm{S}}=10$) demonstrates rapid convergence: after an initial improvement from $R=1$ to $R=2$, the logical error rate plateaus ($P_{\mathrm{L}} \approx 6\times 10^{-3}$). This behavior validates Theorem~\ref{thm:compare-repeated}, indicating that spatial stabilizer redundancy effectively substitutes for further temporal repetition. Conversely, the `Bad' code ($d_{\mathrm{S}}=2$) shows no such convergence, with $P_{\mathrm{L}}$ fluctuating due to the parity constraints of majority voting. The `Good' code's higher absolute $P_{\mathrm{L}}$ is attributable to its larger blocklength ($n=126$ vs.\ $n=36$), which increases the frame error probability per cycle despite its superior measurement robustness.

%%%%%%%%%%%%%%%%%%%%%%%%%%%%%%%%%%%%%%%%%%%%%%%%%%%%
\section{Conclusion}
\label{sec:conclusion}

% In this work, we have established an algebraic framework for understanding single-shot decoding in coprime Bivariate Bicycle codes. We proved that the polynomial $g(z)$ plays a dual role: it determines the logical dimension of the quantum code and simultaneously generates the classical syndrome codes used to correct measurement errors.

% This duality reveals a fundamental design tension in the coprime BB ansatz. By the Singleton bound (Theorem~\ref{thm:singleton-syndrome}), the syndrome distance is upper-bounded by the stabilizer redundancy, which is directly coupled to the quantum rate. Consequently, ``good'' quantum codes with high rates necessarily have sparse redundancy, limiting their single-shot measurement correction radius.

% Despite this structural bottleneck, we demonstrated that by carefully selecting $g(z)$ to maximize the BCH distance of the syndrome code, one can construct short-blocklength BB codes ($N=63$) that support effective single-shot decoding. Our numerical results with BP+OSD confirm that these codes achieve logical error rates comparable to multi-round repetition schemes while significantly reducing the decoding cycle time.

% Future work should extend this analysis to circuit-level noise models, where hook errors may further constrain the effective distance, and explore generalized 2BGA constructions where the coupling between rate and redundancy can be relaxed.

%Concise:
This work establishes an algebraic framework for single-shot decoding in coprime Bivariate Bicycle codes, proving that the polynomial $g(z)$ simultaneously sets the quantum logical dimension and generates the classical syndrome codes for measurement correction. This reveals a fundamental trade-off: high-rate quantum codes inherently have limited stabilizer redundancy (per the Singleton bound), constraining their single-shot capability. However, we demonstrate that maximizing the BCH distance of $g(z)$ allows short codes ($N=63$) to achieve single-shot fidelity comparable to multi-round schemes, with significantly reduced latency. Future work should address circuit-level noise and explore generalized constructions to relax the rate-redundancy coupling.

\bibliographystyle{IEEEtran}
\bibliography{refs}

\end{document}